\documentclass[aps,pra,twocolumn,superscriptaddress,10pt]{revtex4-2}

\usepackage{amsmath}
\usepackage{amssymb}
\usepackage{graphicx}
\usepackage{amsthm}
\usepackage{mathtools}

\newtheorem{theorem}{Theorem}
\newtheorem{lemma}[theorem]{Lemma}
\newtheorem{proposition}[theorem]{Proposition}
\newtheorem{definition}[theorem]{Definition}
\newtheorem{remark}[theorem]{Remark}

\DeclareMathOperator{\spanop}{span}
\newcommand{\C}{\mathbb{C}}
\newcommand{\R}{\mathbb{R}}
\newcommand{\PP}{\mathbb{P}}
\newcommand{\YO}{\mathrm{YO}}

\usepackage[
  breaklinks=true,
  colorlinks=true,
  citecolor=blue,
  linkcolor=blue,
  urlcolor=blue
]{hyperref}
\usepackage{orcidlink}

\begin{document}

\title{Chromatic Completeness and the Independence of Geometric Obstruction}

\author{Karl Svozil\,\orcidlink{0000-0001-6554-2802}}
\affiliation{Institute for Theoretical Physics, TU Wien,
Wiedner Hauptstra{\ss}e 8-10/136, A-1040 Vienna, Austria}
\email{karl.svozil@tuwien.ac.at}

\date{\today}

\begin{abstract}
We establish a strict logical separation between two distinct phenomena in
orthogonality hypergraphs: \emph{chromatic completeness}, the possibility of
assigning a single globally consistent nondegenerate spectrum to all contexts,
and \emph{geometric coordinatizability}, the existence of a faithful
orthogonal representation by rays.  A strong chromatic number larger than the
Hilbert-space dimension obstructs only the former.  It does not, by itself,
obstruct the existence of a faithful orthogonal representation.  We make this
separation explicit by comparing two three-dimensional examples with the same
strong chromatic number.  A completed 25-ray version of the Yu--Oh
configuration has strong chromatic number four and nevertheless possesses an
explicit faithful orthogonal representation in $\R^3$.  Conversely,
Greechie's $G_{32}$ hypergraph also has strong chromatic number four, and has
a separating and unital set of two-valued states, but we give an elementary
algebraic proof that it admits no faithful orthogonal representation in
$\C^3$.  The obstruction in $G_{32}$ is therefore not chromatic but
projective-geometric: the incidence relations force two distinct atoms to
collapse onto the same ray.
\end{abstract}

\maketitle

\section{Introduction}

An orthogonality hypergraph $H=(V,E)$ consists of a vertex set $V$ and a
collection $E$ of hyperedges, called \emph{contexts}.  In the intended
three-dimensional case each context is a triple of mutually orthogonal
rank-one projectors, or equivalently a triple of mutually orthogonal rays.
More generally, an $n$-uniform orthogonality hypergraph is intended to model
contexts consisting of $n$ mutually orthogonal rays in an $n$-dimensional
Hilbert space.

A \emph{faithful orthogonal representation} (FOR) of an $n$-uniform
hypergraph $H$ in $\C^n$ is an assignment
\[
    i\longmapsto [v_i]\in \PP(\C^n)
\]
of distinct rays to distinct vertices such that every context
$e=\{i_1,\ldots,i_n\}\in E$ is represented by an orthonormal basis,
\[
    \langle v_{i_j},v_{i_k}\rangle=0\quad (j\neq k).
\]
Here and below the word faithful means, at minimum, that distinct vertices
are assigned to distinct rays.  The main negative result below uses only this
minimal distinctness requirement.  The existence of faithful orthogonal
representations is a central question in the quantum logic of the
Kochen--Specker theorem~\cite{kochen1}.

Recent work on chromatic contextuality~\cite{svozil-2021-chroma,svozil-2025-color} emphasizes that if a hypergraph requires more than $n$
colors, then it cannot support a representation in which every context is
assigned the same fixed set of $n$ spectral labels.  This observation is
correct, but it concerns an additional layer of structure: the global
spectral labeling of an already specified incidence geometry.  It does not
settle whether the incidence structure itself can be coordinatized by rays.

The purpose of this paper is to separate these two issues.  Chromatic
completeness is a global labeling condition.  Geometric coordinatizability is
a projective-geometric realization condition.  These two conditions are
logically independent.  The completed Yu--Oh configuration gives a
coordinatizable hypergraph with strong chromatic number four in dimension
three.  Greechie's $G_{32}$ gives a hypergraph with the same strong chromatic
number, and even a separating and unital set of two-valued states, but no
faithful representation in $\C^3$.  Thus the chromatic number, the supply of
two-valued states, and geometric coordinatizability are three distinct layers
of structure.

\section{Definitions}

The term ``chromatic number'' for hypergraphs is not completely uniform in
the literature.  Throughout this paper we use the following convention.

\begin{definition}[Strong chromatic number]
Let $H=(V,E)$ be a hypergraph.  A \emph{strong $k$-coloring} is a map
$c:V\to\{1,\ldots,k\}$ such that any two vertices occurring in a common
context have different colors.  Equivalently, for every context
$e\in E$, the restriction $c|_e$ is injective.  The least such $k$ is denoted
by $\chi(H)$.
\end{definition}

For an $n$-uniform hypergraph, a strong $n$-coloring assigns all $n$ colors
to every context exactly once.  This is the coloring notion relevant to
nondegenerate spectral assignments.

\begin{definition}[Chromatic completeness]\label{def:chromatic-completeness}
Let $H=(V,E)$ be an $n$-uniform orthogonality hypergraph and let
$\rho$ be a faithful orthogonal representation of $H$ in $\C^n$.
The representation $\rho$ is \emph{chromatically complete} if there exists a
single set
\[
    \Sigma=\{\lambda_1,\ldots,\lambda_n\}
\]
of $n$ distinct scalars and a map $c:V\to\Sigma$ such that for every context
$e\in E$, the restriction $c|_e:e\to\Sigma$ is a bijection.
\end{definition}

Thus each context receives the same spectrum, with each spectral value used
exactly once.

\begin{remark}[Chromatic completeness and strong coloring]\label{rem:chromatic-completeness}
For an $n$-uniform hypergraph, chromatic completeness is exactly strong
$n$-colorability.  Indeed, if a faithful orthogonal representation is
chromatically complete, the $n$ spectral labels themselves define a strong
$n$-coloring, because every context receives each label exactly once.
Conversely, any strong $n$-coloring may be relabeled by an arbitrary fixed
nondegenerate spectrum
$\Sigma=\{\lambda_1,\ldots,\lambda_n\}$, giving precisely the spectral
assignment required in Definition~\ref{def:chromatic-completeness}.
\end{remark}

Consequently, if $\chi(H)>n$, no chromatically complete representation can
exist.  This is a statement about the possibility of a global spectral
labeling.  It is not, by itself, a statement about the existence of a ray
representation.

\section{Spectral relabeling}

For a fixed geometric context, the numerical eigenvalues attached to a
nondegenerate observable are not part of the ray geometry.  They are labels
of the rank-one spectral projections.

\begin{lemma}[Spectral equivalence]\label{lem:spectral}
Let
\[
    A=\sum_{i=1}^{n}a_iP_i
\]
be a maximal observable on $\C^n$, where the spectral projections $P_i$ are
one-dimensional and the eigenvalues $a_i$ are distinct.  Then every maximal
observable sharing the same spectral projections is of the form
\[
    B=f(A),
\]
where $f$ is injective on the spectrum of $A$.  Conversely, every injective
function on the spectrum of $A$ produces a maximal observable with the same
spectral projections.
\end{lemma}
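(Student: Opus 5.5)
The plan is to use the spectral theorem together with the functional calculus for a diagonalizable operator with finite spectrum. Since the $P_i$ are orthogonal rank-one projections summing to the identity, any maximal observable $B$ with the same spectral projections can be written $B=\sum_{i=1}^n b_iP_i$ with distinct eigenvalues $b_i$. Here "sharing the same spectral projections" means that the set of rank-one spectral projections of $B$ is exactly $\{P_1,\ldots,P_n\}$. Since the $P_i$ are rank one, $B$ is nondegenerate, so its eigenvalues are distinct and the labelling $b_i$ is well defined.

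For the forward direction, define $f$ on the finite set $\sigma(A)=\{a_1,\ldots,a_n\}$ by $f(a_i)=b_i$. This is well defined because the $a_i$ are distinct, and injective because the $b_i$ are distinct. The functional calculus for $A=\sum_i a_iP_i$ gives
\[
    f(A)=\sum_{i=1}^n f(a_i)P_i .
\]
To justify this without invoking general theory, take $f$ to be the Lagrange interpolation polynomial through the points $(a_i,b_i)$. Then use $P_iP_j=\delta_{ij}P_i$ and $\sum_iP_i=\mathbb{1}$ to compute $A^k=\sum_i a_i^kP_i$, and extend by linearity. Hence $f(A)=\sum_i b_iP_i=B$.

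For the converse, let $f$ be injective on $\sigma(A)$. The same computation gives $f(A)=\sum_i f(a_i)P_i$. Its eigenvalues $f(a_i)$ are pairwise distinct by injectivity, so each eigenspace is exactly $\mathrm{ran}\,P_i$, which is one-dimensional. Thus $f(A)$ is maximal (nondegenerate) with spectral projections $P_1,\ldots,P_n$. If observables are required to be self-adjoint, restrict $f$ to be real-valued; otherwise $f(A)$ is a normal operator with the same spectral projections, and the statement holds in that setting.

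The only step needing care is the uniqueness of the spectral decomposition of $B$, that is, that its spectral projections are determined up to relabelling. Without this, the correspondence $a_i\mapsto b_i$ would not be canonical. I would handle it by noting that for a normal operator the spectral projection for eigenvalue $b_i$ is the orthogonal projection onto $\ker(B-b_i\mathbb{1})$. This kernel is one-dimensional and equals $\mathrm{ran}\,P_i$, so the pairing of $b_i$ with $P_i$ is forced. Everything else is routine algebra with orthogonal idempotents.
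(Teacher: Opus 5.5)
Your proof is correct and follows essentially the same route as the paper. You define $f(a_i)=b_i$, realize it by the Lagrange interpolation polynomial, and use $P_iP_j=\delta_{ij}P_i$ together with $\sum_iP_i$ being the identity to obtain $f(A)=\sum_i b_iP_i=B$, then run the same computation for the converse. Your additional remarks are small refinements of points the paper leaves implicit: that the pairing of $b_i$ with $P_i$ is forced by the one-dimensional eigenspaces, and that $f$ must be real-valued if observables are required to be self-adjoint.
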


\begin{proof}
Let
\[
    B=\sum_{i=1}^{n}b_iP_i
\]
be another maximal observable with the same one-dimensional spectral
projections.  Define $f$ on the finite spectrum of $A$ by
\[
    f(a_i)=b_i,\qquad i=1,\ldots,n.
\]
Since $B$ is maximal, the values $b_i$ are pairwise distinct.  Hence $f$ is
injective on $\{a_1,\ldots,a_n\}$.

An explicit polynomial representative of $f$ is given by Lagrange
interpolation:
\[
    f(x)=\sum_{i=1}^{n} b_i
    \prod_{\substack{j=1\\ j\neq i}}^{n}
    \frac{x-a_j}{a_i-a_j}.
\]
The finite functional calculus then gives
\[
    f(A)=\sum_{i=1}^{n} f(a_i)P_i
        =\sum_{i=1}^{n} b_iP_i
        =B.
\]
Conversely, if $f$ is injective on the spectrum of $A$, then
$\sum_i f(a_i)P_i$ has the same spectral projections as $A$ and still has
nondegenerate spectrum.
\end{proof}

Therefore, at the level of a fixed projective measurement, changing the
nondegenerate eigenvalues amounts to an injective relabeling of outcomes.
Chromatic completeness asks whether such labels can be chosen globally and
context-independently.  Failure of chromatic completeness need not mean
failure of the underlying ray geometry.

\section{Independence of coloring and coordinatization}

The following proposition makes the separation explicit.  The examples are
already available in dimension three.

\begin{proposition}[Independence]\label{prop:independence}
The strong chromatic number of an orthogonality hypergraph is neither a
necessary nor a sufficient criterion for the existence of a faithful
orthogonal representation.  More explicitly, already for $n=3$:
\begin{enumerate}
    \item[(i)] there are hypergraphs with $\chi(H)\leq 3$ that possess an
    FOR in $\C^3$, and hypergraphs with $\chi(H)\leq 3$ that possess no
    FOR in $\C^3$;
    \item[(ii)] there are hypergraphs with $\chi(H)>3$ that possess a FOR
    in $\R^3$, and hypergraphs with $\chi(H)>3$ that possess no FOR
    in $\C^3$.
\end{enumerate}
\end{proposition}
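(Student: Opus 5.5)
The proof exhibits one explicit hypergraph for each of the four cases, all $3$-uniform. The existence claims are settled by writing down coordinates and checking orthogonality. The non-existence claims are settled by showing that the incidences force two distinct vertices onto the same ray.

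\emph{Case (i), with an FOR.} Take the single context $H_1=(\{1,2,3\},\{\{1,2,3\}\})$. Its strong chromatic number is $3$, and the standard basis of $\C^3$ is an FOR. For a less trivial witness, use two contexts sharing one vertex. The colouring is immediate, and rotating the basis about the shared vector gives distinct rays.

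\emph{Case (i), without an FOR.} Here we need a $3$-colourable hypergraph that fails for purely geometric reasons. The simplest candidate is two distinct contexts $\{1,2,3\}$ and $\{1,2,4\}$ sharing two vertices, coloured $c(1)=1$, $c(2)=2$, $c(3)=c(4)=3$. In $\C^3$, the ray orthogonal to both $v_1$ and $v_2$ is unique, so $[v_3]=[v_4]$. This contradicts the distinctness requirement in the definition of FOR.

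\emph{Case (ii), with an FOR.} Take the completed 25-ray Yu--Oh configuration. Its triangles are completed to full orthogonal triads. We will list explicit integer vectors in $\R^3$ of the form $(\pm1,\pm1,\pm1)$, $(1,0,0)$, $(1,\pm1,0)$, $(2,\pm1,\pm1)$ and similar. Then we check that the 25 vectors are pairwise non-proportional and that each listed context is an orthogonal triple.

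For $\chi>3$ there are two options. One is to identify a sub-hypergraph with no strong $3$-colouring by a short forced-propagation argument: fix colours on one triad and propagate through the shared vertices until some context is forced to repeat a colour. The other is to use the known fact that the Yu--Oh set admits no two-valued state singling out certain rays. Then $\chi\le 3$ would yield a two-valued state via any single colour class, and this is contradicted at the critical pair. The propagation is a finite case check, and that is what I would present. The upper bound $\chi=4$ comes from exhibiting a $4$-colouring.

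\emph{Case (ii), without an FOR.} Take Greechie's $G_{32}$. First, $\chi(G_{32})>3$ follows by the same kind of colour-propagation argument: a strong $3$-colouring would make each colour class a two-valued state, whereas the known structure of $G_{32}$ forbids such a partition. A $4$-colouring is exhibited for completeness.

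For non-representability, suppose an FOR exists in $\C^3$. Use the unitary freedom to fix one context as the standard basis. Then propagate the constraints: in $\C^3$, a vertex lying in two contexts with known partners is determined as the cross product $\overline{u\times w}$ of those two orthogonal partners. Parametrize the remaining free rays by a few complex parameters. Following the incidence chain forces algebraic relations on these parameters, and ultimately two distinct atoms receive proportional vectors.

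This last computation is the main obstacle. I would first try to find a short cycle of contexts in $G_{32}$, such as a loop of Greechie-diagram triangles, along which repeated cross products close up to an equation whose only solutions collapse two atoms. If the parameters do not reduce cleanly, the fallback is a Gröbner-style elimination, organized so that each step uses only the uniqueness of the orthocomplement of two orthogonal rays in $\C^3$.
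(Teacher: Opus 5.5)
Your four witnesses are the paper's, and part (i) is complete as written. The genuine gap is in the second half of (ii). The non-existence of an FOR of $G_{32}$ in $\C^3$ is the heart of the proposition, and you assert its outcome (``ultimately two distinct atoms receive proportional vectors'') without deriving it. The missing idea is which context to normalize and which relations to extract:
\begin{itemize}
\item Fix the \emph{central} block $\{13,14,15\}$ as $\{e_1,e_2,e_3\}$. The spoke blocks then place $(v_4,v_{10})$, $(v_6,v_{12})$, $(v_2,v_8)$ in coordinate planes, with unit parameters $(a,b)$, $(c,d)$, $(p,q)$.
\item For an orthonormal basis $\{x,y,z\}$ with $u\perp x$ and $w\perp z$, one has $\langle u,w\rangle=\langle u,y\rangle\langle y,w\rangle$.
\item Apply this to the opposite outer blocks $\{v_3,v_2,v_1\}$ (with $u=v_4$, $w=v_{12}$) and $\{v_9,v_8,v_7\}$ (with $u=v_{10}$, $w=v_6$). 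This gives $bc=-ad\,\overline{q}\,p$ and $ad=bc\,\overline{p}\,q$.
\item Combining them, $ad\,(1+|p|^2|q|^2)=0$, so $[v_4]=[v_{15}]$ or $[v_6]=[v_{13}]$.
\end{itemize}
The decisive step is a positivity argument, not ideal membership. If conjugates are treated as independent unknowns $p',q'$, the same relations give $ad\,(1+pp'qq')=0$ with $pp'+qq'=1$. Then $1+pp'qq'=0$ is solvable, e.g.\ $pp'=(1+\sqrt{5})/2$, $qq'=(1-\sqrt{5})/2$. So the Gr\"obner-style fallback you mention would not detect the obstruction unless it explicitly uses the definiteness of the Hermitian form at this point.

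The chromatic lower bounds are also only gestured at. For $G_{32}$, ``the known structure of $G_{32}$ forbids such a partition'' restates the claim. $G_{32}$ has many two-valued states; what must be excluded is a partition of the $15$ atoms into three of them. The paper does this by noting that the dual graph (blocks as vertices, atoms as edges) is the Petersen graph. A strong $3$-colouring would then be a $3$-edge-colouring of a graph with chromatic index four, which is impossible.

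For Yu--Oh, your two-valued-state alternative does not close as phrased, since a single critical pair can simply be split between two colour classes. It works once you add a pigeonhole step: no colour class can contain two of $h_0,\ldots,h_3$. For example, a class containing $h_0$ and $h_1$ would contain none of $y_2^\pm,y_3^\pm$. It would then have to contain both $z_2$ and $z_3$, which share a context. So the four $h$'s need four colours, which is shorter than the case analysis the paper sketches.

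Finally, the explicit Yu--Oh coordinates are promised but not written down. That is routine, but the first half of (ii) rests on them.
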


\begin{proof}
For $\chi(H)\leq 3$ with a FOR, take the single context
\[
    E=\{\{1,2,3\}\}.
\]
It has $\chi(H)=3$ and is represented by the standard basis of $\C^3$.

For $\chi(H)\leq 3$ without a faithful orthogonal representation, take
\[
    E=\bigl\{\{1,2,3\},\{1,2,4\}\bigr\}.
\]
This hypergraph has a strong 3-coloring, for instance
\[
    c(1)=1,\qquad c(2)=2,\qquad c(3)=c(4)=3.
\]
However, in any representation in $\C^3$, the first context forces
$[v_3]$ to be the one-dimensional orthogonal complement of
$\spanop(v_1,v_2)$.  The second context forces $[v_4]$ to be the same
orthogonal complement.  Hence $[v_3]=[v_4]$, contradicting faithfulness.

For $\chi(H)>3$ with a FOR, use the completed Yu--Oh hypergraph described
in Sec.~\ref{sec:yuo}.  It has $\chi=4$ and an explicit faithful
representation in $\R^3$.

For $\chi(H)>3$ without a faithful orthogonal representation, use Greechie's $G_{32}$ hypergraph.  Its
strong chromatic number is four, as shown in Sec.~\ref{sec:g32-chromatic},
while Theorem~\ref{thm:g32} proves that no faithful orthogonal
representation in $\C^3$ exists.
\end{proof}

\section{A completed 25-ray Yu--Oh hypergraph}\label{sec:yuo}

The Yu--Oh construction starts from 13 rays in $\R^3$~\cite{Yu-2012} (for extensions see Refs.~\cite{harding2025remarksIJTP,Cabello2025,svozil-2025-MYOCHS}).  In
one standard coordinatization these rays are
\begin{align*}
 z_1&=(1,0,0),& z_2&=(0,1,0),& z_3&=(0,0,1),\\
 y_1^-&=(0,1,-1),& y_2^-&=(1,0,-1),& y_3^-&=(1,-1,0),\\
 y_1^+&=(0,1,1),& y_2^+&=(1,0,1),& y_3^+&=(1,1,0),\\
 h_0&=(1,1,1),& h_1&=(-1,1,1),& h_2&=(1,-1,1), \\
 h_3&=(1,1,-1).
\end{align*}
The original Yu--Oh argument is not a parity proof based on the absence of
all Kochen--Specker value assignments.  Rather, KS value assignments to the
13 rays exist, but they fail to reproduce a state-independent quantum
prediction expressed by the Yu--Oh inequality~\cite{Yu-2012}.  This is
precisely why the example is useful here: it separates coordinatizability
from stronger global consistency requirements.

We shall use the 3-uniform completion of the Yu--Oh orthogonality graph: for
each orthogonal pair among the original 13 rays, a third ray is included so
that the pair belongs to a full orthogonal triad.  The resulting completed
hypergraph has 25 vertices and 16 contexts.  It is the completed version
used in the chromatic-contextuality analysis of Ref.~\cite{svozil-2025-color}.

Let $H_{\YO}^{+}=(V_{\YO}^{+},E_{\YO}^{+})$, with
$V_{\YO}^{+}=\{1,\ldots,25\}$, have contexts
\begin{equation}\label{eq:yuo-blocks}
\begin{aligned}
E_{\YO}&^{+}=\{
\{1,2,3\},
\{7,8,9\},
\{4,5,6\},
\{2,8,5\},\\
&\{3,12,25\},
\{9,12,16\},
\{6,12,17\},
\{3,11,14\},\\
&\{7,11,15\},
\{4,11,24\},
\{9,10,23\},
\{1,10,19\},\\
&\{4,10,18\},
\{1,13,20\},
\{7,13,21\},
\{6,13,22\}
\}.
\end{aligned}
\end{equation}
The first four contexts represent the three $\{y_i^+,z_i,y_i^-\}$ triads
and the coordinate triad $\{z_1,z_2,z_3\}$.  The remaining contexts complete
the orthogonal pairs involving $h_0,h_1,h_2,h_3$.

An explicit coordinatization is obtained by assigning to vertex $i$ the ray
spanned by the following unnormalized vector $r_i\in\R^3$:
\begin{widetext}
\begin{align}\label{eq:yuo-coordinates}
\begin{array}{c|rrrrrrrrrrrrr}
 i      &1&2&3&4&5&6&7&8&9&10&11&12&13\\ \hline
 x_i    &0&1&0&1&0&1&1&0&1& 1&-1&1& 1\\
 y_i    &1&0&1&1&0&-1&0&1&0&-1& 1&1& 1\\
 z_i    &1&0&-1&0&1&0&1&0&-1& 1& 1&1&-1
\end{array}
\\[1ex]
\begin{array}{c|rrrrrrrrrrrr}
 i      &14&15&16&17&18&19&20&21&22&23&24&25\\ \hline
 x_i    &2&1&1&1&1&2&2&1&1&1&1&2\\
 y_i    &1&2&-2&1&-1&1&-1&-2&1&2&-1&-1\\
 z_i    &1&-1&1&-2&-2&-1&1&-1&2&1&2&-1
\end{array}
\end{align}
\end{widetext}
That is, $r_i=(x_i,y_i,z_i)^T$, and the represented ray is $[r_i]$.
The first 13 vertices are labeled as follows:
\begin{align*}
 r_1&=y_1^+,& r_2&=z_1,& r_3&=y_1^-,\\
 r_4&=y_3^+,& r_5&=z_3,& r_6&=y_3^-,\\
 r_7&=y_2^+,& r_8&=z_2,& r_9&=y_2^-,\\
 r_{10}&=h_2,& r_{11}&=h_1,& r_{12}&=h_0,& r_{13}&=h_3,
\end{align*}
where equality means equality of rays.  The remaining twelve vertices are
completion rays.  For instance,
\[
    r_{14}\parallel r_3\times r_{11},\qquad
    r_{15}\parallel r_7\times r_{11},\qquad
    r_{23}\parallel r_9\times r_{10},
\]
and similarly for the other added vertices.

Direct evaluation of Euclidean inner products gives
\[
    r_i\cdot r_j=0
\]
whenever $i\neq j$ occur in a common context of
$E_{\YO}^{+}$.  Thus Eq.~\eqref{eq:yuo-coordinates} gives a faithful
orthogonal representation of $H_{\YO}^{+}$ in $\R^3$.

The strong chromatic number of this completed hypergraph is four.  A strong
3-coloring of $H_{\YO}^{+}$ would restrict to a proper 3-coloring of the
Yu--Oh 13-ray orthogonality graph, because every orthogonal pair among the
original 13 rays appears in at least one completed triad.  But the Yu--Oh
orthogonality graph is not 3-colorable; the elementary case analysis is the
one given in Ref.~\cite{svozil-2025-color}.  Briefly, after fixing the color
of $h_0$, the three vertices $y_1^-,y_2^-,y_3^-$ must use only the two
remaining colors.  Up to symmetry there are two cases.  In both cases the
forced colors on the coordinate vertices and on the adjacent $y_i^+$ vertices
leave one of $h_1,h_2,h_3$ with no available color.  Hence three colors do
not suffice.

On the other hand, the following strong 4-coloring shows that four colors do
suffice:
\begin{widetext}
\begin{align}\label{eq:yuo-coloring}
\begin{array}{c|rrrrrrrrrrrrr}
 i    &1&2&3&4&5&6&7&8&9&10&11&12&13\\ \hline
 c(i) &0&2&1&0&3&1&0&1&2&1&2&0&2
\end{array}
\\[1ex]
\begin{array}{c|rrrrrrrrrrrr}
 i    &14&15&16&17&18&19&20&21&22&23&24&25\\ \hline
 c(i) &0&1&1&2&2&2&1&1&0&0&1&2
\end{array}
\end{align}
\end{widetext}
Each context in Eq.~\eqref{eq:yuo-blocks} contains three distinct colors.
Therefore
\[
    \chi(H_{\YO}^{+})=4>3,
\]
although $H_{\YO}^{+}$ has an explicit FOR in $\R^3$.  This is the desired
example showing that chromatic obstruction is not geometric
non-coordinatizability.

\section{Greechie's $G_{32}$ hypergraph}\label{sec:g32}

We now turn to the converse phenomenon: a hypergraph with the same strong
chromatic number as the completed Yu--Oh example, but with no faithful
orthogonal representation in $\C^3$.

\subsection{The incidence structure}

Greechie's diagram $G_{32}$~\cite{greechie:71} has vertex set
$\{1,\ldots,15\}$ and ten blocks:
\begin{equation}\label{eq:g32-blocks}
\begin{aligned}
E(G_{32})=\{&\{1,2,3\},
\{3,4,5\},
\{5,6,7\},
\{7,8,9\},\\
&\{9,10,11\},
\{11,12,1\},
\{4,10,13\},\\
&\{6,12,14\},
\{8,2,15\},
\{13,14,15\}\}.
\end{aligned}
\end{equation}
Each block is intended to represent a triad of mutually orthogonal rays.
The hypergraph has a separating and unital set of two-valued states, as
shown in Ref.~\cite{svozil-2021-chroma}; nevertheless, it is not
3-colorable.  Thus it already separates the existence of two-valued states
from chromatic completeness.  The theorem below shows that it also separates
chromatic obstruction from geometric obstruction.

\subsection{The strong chromatic number of $G_{32}$}\label{sec:g32-chromatic}

The hypergraph $G_{32}$ has strong chromatic number four~\cite{svozil-2021-chroma}.  A strong
4-coloring is
\begin{equation}\label{eq:g32-coloring}
\begin{array}{c|rrrrrrrrrrrrrrr}
 i    &1&2&3&4&5&6&7&8&9&10&11&12&13&14&15\\ \hline
 c(i) &0&1&2&0&1&0&2&0&1&2&3&1&1&2&3.
\end{array}
\end{equation}
A direct check of Eq.~\eqref{eq:g32-blocks} shows that each block contains
three distinct colors.

To see that three colors do not suffice, form the dual incidence graph:
its vertices are the ten blocks of $G_{32}$, and each atom of $G_{32}$ is an
edge joining the two blocks in which that atom occurs.  Since every atom of
$G_{32}$ occurs in exactly two blocks, this dual is a cubic graph.  For
$G_{32}$ the dual graph is the Petersen graph.  A strong 3-coloring of the
atoms of $G_{32}$ would be precisely a proper 3-edge-coloring of this cubic
dual graph: the three atoms incident with each block would have to receive
three different colors.  But the Petersen graph has chromatic index four,
not three.  Hence $G_{32}$ has no strong 3-coloring, and therefore
\[
    \chi(G_{32})=4.
\]
This agrees with the previous chromatic analysis of $G_{32}$ in
Ref.~\cite{svozil-2021-chroma}.

\subsection{Non-coordinatizability theorem}

\begin{theorem}\label{thm:g32}
The hypergraph $G_{32}$ admits no faithful orthogonal representation in
$\C^3$.
\end{theorem}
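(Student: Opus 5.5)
The plan is a direct coordinate argument in $\C^3$. It rests on one elementary fact, the same one used for the two-context example in Proposition~\ref{prop:independence}: if $u,w\in\C^3$ are linearly independent, then the unique ray orthogonal to both is $[\overline{u\times w}]$, where the bar denotes componentwise complex conjugation. Consequently, in every block $\{i,j,k\}$ of Eq.~\eqref{eq:g32-blocks} the third ray is determined by the other two.

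The structure of $G_{32}$ suggests the order of computation. The six blocks $\{1,2,3\},\{3,4,5\},\ldots,\{11,12,1\}$ form a hexagonal loop, whose corner atoms are $1,3,5,7,9,11$ and whose midpoint atoms are $2,4,6,8,10,12$. Three spoke blocks $\{4,10,13\},\{6,12,14\},\{8,2,15\}$ each join opposite midpoints to a new atom. The last block $\{13,14,15\}$ closes the configuration.

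I would proceed as follows.

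\begin{enumerate}
\item Use unitary freedom, together with a phase for each ray, to set $v_1=e_1$, $v_2=e_2$, $v_3=e_3$. Faithfulness is essential here, since it makes the two rays chosen in a block linearly independent, so every cross product used below is nonzero.

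\item Parametrize the loop. The rays $v_4,v_5$ lie in $e_3^\perp$, and the rays $v_{11},v_{12}$ lie in $e_1^\perp$. Going around the hexagon, each corner ray is the orthogonal complement of the two rays preceding it. For example, $v_5\parallel\overline{v_3\times v_4}$, and then $v_6$ is a free ray in $v_5^\perp$. Continuing in this way, every loop atom is expressed through a small number of complex parameters, and closing the loop at block $\{11,12,1\}$ imposes one constraint.

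\item Compute the spoke rays
\[
v_{13}\parallel\overline{v_4\times v_{10}},\qquad
v_{14}\parallel\overline{v_6\times v_{12}},\qquad
v_{15}\parallel\overline{v_8\times v_2},
\]
and impose the three final orthogonality conditions
\[
\langle v_{13},v_{14}\rangle=\langle v_{14},v_{15}\rangle=\langle v_{13},v_{15}\rangle=0 .
\]
\end{enumerate}

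Some coordinate of a parametrizing vector may vanish, for instance $v_4\parallel e_1$ or $v_4\parallel e_2$. I would handle these boundary cases first. Each should directly give a coincidence of two rays: for example, $v_4\parallel e_1$ forces $[v_4]=[v_1]$, and others force $[v_5]=[v_1]$, $[v_{11}]=[v_3]$, and so on. In each such case the representation fails faithfulness.

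In the generic case, all relevant coordinates are nonzero, and one can rescale so that several of them equal $1$. The three orthogonality equations then become polynomial identities in the remaining parameters and their conjugates. The aim is to show that they force a relation which makes some pair of atoms coincide. Given the threefold symmetry of the spokes, natural candidates are $[v_{13}]=[v_{14}]$, or a midpoint ray becoming parallel to the opposite corner ray.

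The main obstacle is this final elimination. It has to be organized so that it stays elementary and avoids a long expansion in conjugated variables. My first attempt would be to exploit the $\mathbb{Z}_3$ rotation of the Petersen dual together with the reflection symmetries. The goal is to reduce the problem to a single scalar parameter per spoke, after which the conditions $\langle v_{13},v_{14}\rangle=0$ and its two cyclic images should become three multiplicative relations whose product yields a contradiction unless one factor vanishes. Each vanishing factor would then be traced back to an explicit collapse of two atoms.

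If that reduction proves too clumsy, the fallback is a coordinate-free approach. Work with the projectors, and use the identity
\[
P_c=I-P_a-P_b
\]
for each block $\{a,b,c\}$. Propagating this identity around the hexagon gives a linear relation among the spoke projectors, and the final block should then force two of those projectors to be equal.
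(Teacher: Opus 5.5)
Your proposal sets up a parametrization but stops where the proof has to happen, so there is a genuine gap. Steps 1--3 hold for any configuration. The whole content of the theorem is that the resulting equations have no solution with pairwise distinct rays, and for that step you offer only expectations (\emph{should become three multiplicative relations}, \emph{natural candidates are}).

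The symmetry reduction is not available as stated. A hypothetical representation need not be invariant under the $\mathbb{Z}_3$ rotation of the Petersen dual. Symmetry can shorten a case analysis, but it cannot reduce the configuration to one parameter per spoke.

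Your gauge also works against you. With $\{1,2,3\}$ fixed, the spoke rays $v_{13},v_{14},v_{15}$ are cross products of rays that are themselves iterated cross products around the hexagon, built from four free rays and one complex closing constraint. The final three conditions are therefore high-degree identities in many parameters and their conjugates, which is exactly the expansion you hoped to avoid.

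The fallback fails for a structural reason. Substituting $P_c=I-P_a-P_b$ only produces formal linear consequences of the block identities, and every such consequence is satisfied by every two-valued state. Since $G_{32}$ has a separating set of two-valued states, no such consequence can yield $P_x=P_y$ for distinct atoms $x,y$. The obstruction must use nonlinear information about rank-one projectors.

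The missing idea is to fix the \emph{central} block instead: $v_{13}=e_1$, $v_{14}=e_2$, $v_{15}=e_3$. The spokes then put each midpoint pair into a coordinate plane: $v_4=(0,a,b)^T$ and $v_{10}=(0,-\overline{b},\overline{a})^T$, $v_6=(c,0,d)^T$ and $v_{12}=(-\overline{d},0,\overline{c})^T$, $v_2=(p,q,0)^T$ and $v_8=(-\overline{q},\overline{p},0)^T$, with $|a|^2+|b|^2=|c|^2+|d|^2=|p|^2+|q|^2=1$. This is the real one-parameter-per-spoke reduction, and it comes from the gauge, not from symmetry.

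The corner rays never need coordinates. If $\{x,y,z\}$ is an orthonormal basis with $u\perp x$ and $w\perp z$, expanding $w$ in that basis gives $\langle u,w\rangle=\langle u,y\rangle\langle y,w\rangle$. Apply this twice:
\begin{itemize}
\item to the block $\{1,2,3\}$ with $y=v_2$, $u=v_4\perp v_3$, $w=v_{12}\perp v_1$;
\item to the block $\{7,8,9\}$ with $y=v_8$, $u=v_{10}\perp v_9$, $w=v_6\perp v_7$.
\end{itemize}
This gives
\[
bc=-ad\,\overline{q}\,p,\qquad ad=bc\,\overline{p}\,q,
\]
hence $ad\,(1+|p|^2|q|^2)=0$. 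So either $[v_4]=[v_{15}]$ or $[v_6]=[v_{13}]$, contradicting faithfulness.

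Your instinct about multiplicative relations is right, but it only appears in this gauge, where two relations suffice. Faithfulness is used only at this last step. Contrary to your remark in step 1, linear independence inside a block already follows from orthogonality.
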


\begin{proof}
We use the inner-product convention
\[
    \langle x,y\rangle=x^\dagger y,
\]
so the inner product is conjugate-linear in the first argument and linear in
the second.

Assume, for contradiction, that a faithful orthogonal representation exists.
Since $\{13,14,15\}$ is a block, its three rays form an orthonormal basis.
After applying a global unitary transformation, we may fix
\begin{align*}
    v_{13}&=e_1=\begin{pmatrix}1\\0\\0\end{pmatrix},
    &v_{14}&=e_2=\begin{pmatrix}0\\1\\0\end{pmatrix},
    &v_{15}&=e_3=\begin{pmatrix}0\\0\\1\end{pmatrix}.
\end{align*}

The three blocks meeting this central triangle force three pairs of vectors
into coordinate two-planes.  From $\{4,10,13\}$ we obtain
\[
    v_4=\begin{pmatrix}0\\a\\b\end{pmatrix},\qquad
    v_{10}=\begin{pmatrix}0\\-\overline b\\\overline a\end{pmatrix},
    \qquad |a|^2+|b|^2=1.
\]
From $\{6,12,14\}$ we obtain
\[
    v_6=\begin{pmatrix}c\\0\\d\end{pmatrix},\qquad
    v_{12}=\begin{pmatrix}-\overline d\\0\\\overline c\end{pmatrix},
    \qquad |c|^2+|d|^2=1.
\]
From $\{8,2,15\}$ we obtain
\[
    v_2=\begin{pmatrix}p\\q\\0\end{pmatrix},\qquad
    v_8=\begin{pmatrix}-\overline q\\\overline p\\0\end{pmatrix},
    \qquad |p|^2+|q|^2=1.
\]
These parametrizations are general: in each coordinate two-plane, once one
normalized ray is chosen, the orthogonal ray is unique up to phase, and the
chosen phases above lose no generality at the level of rays.

We shall use the following elementary projection identity.  If
$\{x,y,z\}$ is an orthonormal basis of $\C^3$, $u\perp x$, and $w\perp z$,
then
\begin{equation}\label{eq:projection-identity}
    \langle u,w\rangle
    =\langle u,y\rangle\langle y,w\rangle.
\end{equation}
Indeed,
\[
    w=x\langle x,w\rangle+y\langle y,w\rangle+z\langle z,w\rangle.
\]
Taking the inner product with $u$ eliminates the first term because
$u\perp x$, while the last coefficient vanishes because $w\perp z$.
Only the middle term remains, proving Eq.~\eqref{eq:projection-identity}.

Apply Eq.~\eqref{eq:projection-identity} to the block $\{1,2,3\}$, written
as the orthonormal basis $\{v_3,v_2,v_1\}$.  Since $v_4\perp v_3$ by the
block $\{3,4,5\}$ and $v_{12}\perp v_1$ by the block $\{11,12,1\}$, we get
\begin{equation}\label{eq:g32-id-A}
    \langle v_4,v_{12}\rangle
    =\langle v_4,v_2\rangle\langle v_2,v_{12}\rangle.
\end{equation}
With the parametrization above,
\[
    \langle v_4,v_{12}\rangle=\overline b\,\overline c,
    \qquad
    \langle v_4,v_2\rangle=\overline a\,q,
    \qquad
    \langle v_2,v_{12}\rangle=-\overline p\,\overline d.
\]
Thus
\[
    \overline b\,\overline c
    =-\overline a\,\overline d\,q\,\overline p,
\]
and after complex conjugation,
\begin{equation}\label{eq:g32-constraint-1}
    bc=-ad\,\overline q\,p.
\end{equation}

Next apply Eq.~\eqref{eq:projection-identity} to the block $\{7,8,9\}$,
written as the orthonormal basis $\{v_9,v_8,v_7\}$.  Since
$v_{10}\perp v_9$ by the block $\{9,10,11\}$ and $v_6\perp v_7$ by the
block $\{5,6,7\}$, we obtain
\begin{equation}\label{eq:g32-id-B}
    \langle v_{10},v_6\rangle
    =\langle v_{10},v_8\rangle\langle v_8,v_6\rangle.
\end{equation}
The explicit inner products are
\[
    \langle v_{10},v_6\rangle=ad,
    \qquad
    \langle v_{10},v_8\rangle=-b\,\overline p,
    \qquad
    \langle v_8,v_6\rangle=-qc.
\]
In the last equality the first component of $v_8$ is conjugated once more by
$\langle x,y\rangle=x^\dagger y$; hence the factor is $q$, not $\overline q$.
Therefore
\begin{equation}\label{eq:g32-constraint-2}
    ad=bc\,\overline p\,q.
\end{equation}

Substituting Eq.~\eqref{eq:g32-constraint-1} into
Eq.~\eqref{eq:g32-constraint-2} gives
\[
    ad=(-ad\,\overline q\,p)\overline p\,q
      =-ad\,|p|^2|q|^2.
\]
Hence
\[
    ad\bigl(1+|p|^2|q|^2\bigr)=0.
\]
The factor $1+|p|^2|q|^2$ is strictly positive, so
\[
    ad=0.
\]

If $a=0$, then $|b|=1$ and
\[
    v_4=\begin{pmatrix}0\\0\\b\end{pmatrix}=b e_3.
\]
Thus $[v_4]=[v_{15}]$, contradicting faithfulness.  If $d=0$, then
$|c|=1$ and
\[
    v_6=\begin{pmatrix}c\\0\\0\end{pmatrix}=c e_1.
\]
Thus $[v_6]=[v_{13}]$, again contradicting faithfulness.  Therefore no
faithful orthogonal representation of $G_{32}$ in $\C^3$ exists.
\end{proof}

\begin{remark}
The proof never uses the chromatic number of $G_{32}$.  The contradiction
comes from the propagation of transition amplitudes through the incidence
geometry.  The two factorizations forced by opposite parts of the outer
cycle are algebraically incompatible with distinctness of the central-spoke
rays.  Thus the obstruction is geometric rather than chromatic.
\end{remark}

\section{Conclusion}

The existence of a faithful orthogonal representation, the existence of a
separating set of two-valued states, and the existence of a globally
consistent spectral coloring are distinct questions.  The strong chromatic
condition $\chi(H)>n$ rules out chromatic completeness in dimension $n$,
because it rules out a global assignment of the same $n$ eigenvalue labels
to every context.  It does not rule out a faithful orthogonal representation.

The completed Yu--Oh hypergraph makes this explicit: it has
$\chi=4>3$ but is faithfully represented by 25 rays in $\R^3$.  Greechie's
$G_{32}$ has the same strong chromatic number, and has a separating and
unital set of two-valued states, but no faithful representation in $\C^3$.
The obstruction in $G_{32}$ is detected only by direct geometric analysis,
not by coloring or by the mere presence or absence of two-valued states.

Thus chromatic contextuality concerns global consistency of spectral labels,
whereas geometric coordinatizability concerns realizability of incidence
relations by rays.  These two layers of structure should not be conflated.

\begin{acknowledgments}
I thank Mohammad Hadi Shekarriz for valuable comments and suggestions.

Large language models were used for editorial assistance, local consistency checks, and drafting of alternative phrasings.  The mathematical claims, examples, computational inputs, and final responsibility for the text are the author's.

This research was funded in whole or in part by the \textit{Austrian Science Fund (FWF)} [Grant \href{https://doi.org/10.55776/PIN5424624}{digital object identifier (DOI): 10.55776/PIN5424624}].
The author acknowledges TU Wien Bibliothek for financial support through its Open Access Funding Programme.
\end{acknowledgments}

\bibliography{svozil}

\end{document}